\documentclass[10pt]{amsart}
\usepackage{amssymb}
\usepackage{amsthm}

\usepackage{amssymb,amsmath,amsthm,amsxtra,amsfonts}

\usepackage{xcolor}
\usepackage{colortbl}
\usepackage{graphicx}
\usepackage{multirow}


\newcommand{\bbC}{{\mathbb{C}}}
\newcommand{\bbD}{{\mathbb{D}}}

\newcommand{\bbR}{{\mathbb{R}}}

\newcommand{\bbZ}{{\mathbb{Z}}}

\newcommand{\calH}{{\mathcal H}}
\newcommand{\calJ}{{\mathcal J}}

\newcommand{\calT}{{\mathcal T}}


\newcommand{\beq}{\begin{equation}}
\newcommand{\eeq}{\end{equation}}
\newcommand{\ba}{\begin{align*}}
\newcommand{\ea}{\end{align*}}



\allowdisplaybreaks
\numberwithin{equation}{section}

\newtheorem{theorem}{Theorem}
\newtheorem{lemma}{Lemma}

\newtheorem*{definition}{Definition}
\theoremstyle{remark}
\newtheorem*{remarks}{Remarks}


\begin{document}

%

\title{On Gaussian random matrices coupled to the discrete Laplacian}
\author{Rostyslav Kozhan}
\address{Uppsala University; Department of Mathematics; Uppsala, Sweden}

\begin{abstract}
We study operators obtained by coupling an $n\times n$ random matrix from one of the Gaussian ensembles to the discrete Laplacian. We find the joint distribution of the eigenvalues and resonances of such operators. 
This is one of the possible mathematical models for quantum scattering in a complex physical system 
with one semi-infinite lead attached.
\end{abstract}

\maketitle


\section{Introduction}

Given a random Hermitian $n\times n$ matrix $\calH$ from one of the classical Gaussian ensembles,
we consider the operator on $\ell^2(\bbZ_+)$ obtained from $\calH$ by coupling it to the discrete Laplacian as follows:
\begin{equation}\label{eq:main_operator}
{\scalebox{1.3}{$\widetilde{\calH}=$}}
\left(
\begin{array}
{  c r | c  c  c }
\multicolumn{2}{c|}{{\cellcolor{black!10}}} & & & \phantom{\ddots}\\
\multicolumn{2}{c|}{\multirow{-2}{*}{\cellcolor{black!10}\raisebox{5mm}{\scalebox{1.3}{$\gamma \calH$}}}}  & \kappa & & \phantom{\ddots}\\
\hline
& \kappa & 0 & 1 & \phantom{\ddots}\\
& & 1 & 0 & \ddots \\
\phantom{\ddots} & \phantom{\ddots}& \phantom{\ddots} & \ddots & \ddots
\end{array}
\right)
\end{equation}
Here $\gamma$ is any deterministic constant, and $\kappa$ is either a random variable (independent of $\calH$) with a given distribution or deterministic $\kappa=1$.

Such an operator 
is natural from the point of view of physics: the random matrix part corresponds to a complex physical system of particles whose interactions are unknown, and the discrete Laplacian part corresponds to a lead attached via some coupling of strength $\kappa$.

We are interested in the spectral properties of the operator $\widetilde{\calH}$, namely in the locations of its eigenvalues and resonances (see Section~\ref{ss:finite range} below). In Theorem~\ref{th:resonances}, which is our main result, we compute the joint distribution of eigenvalues and resonances of $\widetilde{\calH}$ for the case of random $\kappa$. See remarks after the theorem for the case of deterministic $\kappa=1$.

The proof involves two main steps: first is to apply the Dumitriu--Edelman~\cite{DE,Trotter} tridiagonalization procedure to reduce $\widetilde{\calH}$ to a Jacobi operator on $\ell^2(\bbZ_+)$; second is to employ the (suitably modified) Geronimo--Case equations~\cite{GC} (see also Damanik--Simon~\cite[Appendix A]{DS2}) to access the Jost function whose zeros determine the locations of eigenvalues and resonances.

Aside from the physical importance, interest to the resonances comes from the fact that the locations of eigenvalues and resonances allow to fully recover the spectral measure of our operator $\widetilde{\calH}$. 
For the background on spectral theory of Jacobi operators we refer the reader to the monographs of Simon~\cite{Rice} and Teschl~\cite{Teschl}. 
The resonance problem for Jacobi operators was the topic of~\cite{BNW_Jac,DS2,Geronimo,GC,GI,Gol16,IK3,K_spectral,K_finite,MNSW,MW}, among many others. 
A closely related scattering theory for Jacobi operators is discussed in~\cite{Teschl}.

An operator-based approach to the asymptotics of the Dumitriu--Edelman Jacobi matrices was studied by Ram\'{i}rez--Rider--Vir\'{a}g~\cite{RRV} and Valk\'{o}--Vir\'{a}g~\cite{VV09}, see also subsequent papers by the same authors. There is also a vast literature on the Jacobi (or discrete Schr\"{o}dinger) operators with random coefficients, in particular in connection to the Anderson model, which we will not attempt to review here.

Random matrix approach to open quantum systems has two other alternatives to $\widetilde{\calH}$:  through non-Hermitian perturbations of Hermitian random matrices and through non-unitary perturbations of unitary random matrices -- see~\cite{FyoSav,FyoSom97,FyoSom,KhoSom} and references therein. Theory of orthogonal polynomials is applicable in both of these scenarios as well: see~\cite{KK_truncations,K_hermitian}.
%
%



%
%



The organization of the paper is as follows. In Sections~\ref{ss:finite range}--\ref{ss:Locations} we provide the background from the theory of Jacobi operators, including  properties  of resonances, the Jost function (perturbation determinant), and the Geronimo--Case equations. In Section~\ref{s:RMT} we state our main result and provide the proof.

\section{Jacobi operators}
\subsection{Finite range operators and perturbation determinants}\label{ss:finite range}
By a Jacobi operator we call a tridiagonal operator acting on $\ell^2(\bbZ_+)$ of the form
\begin{equation}\label{eq:jacobi}
\calJ(\mathbf{a},\mathbf{b})
=
\left(
\begin{array}{cccc}
b_1&a_1&0& \phantom{\ddots}\\
a_1&b_2&a_2& \ddots\\
0&a_2&b_3&\ddots \\
\phantom{\ddots} &\ddots&\ddots&\ddots
\end{array}\right),
\end{equation}
where $\mathbf{a} = \{a_j\}_{j=1}^\infty$, $\mathbf{b} = \{b_j\}_{j=1}^\infty$ have $a_j > 0$ and $b_j\in\bbR$.

The case $\mathbf{a} = \{1\}_{j=1}^\infty$, $\mathbf{b} = \{0\}_{j=1}^\infty$ corresponds to $\calJ_0$, the discrete Laplacian on $\bbZ_+$, and will be referred to as the free Jacobi operator.

For $s\ge 0$, we denote by $\calT^{[2s]}$ the set of all Jacobi operators that have $a_{j} = 1$, $b_{j}=0$ for $j > s$ and $a_{s} \ne 1$. We denote by $\calT^{[2s+1]}$ the set of all Jacobi operators that have $a_{j} = 1$, $b_{j}=0$ for $j > s+1$, and $a_{s} = 1$, but $b_s \ne 0$.
We denote $\calT^{[k\ge0]}$ to be the set of all Jacobi operators that are finite rank perturbations of the free one. It is the disjoint union of all $\calT^{[k]}$, $k\ge0$.

The spectral measure  $\mu$  (with respect to the vector $\mathbf{e}_1 = (1,0,0,\ldots)^T$) of any operator $\calJ\in \calT^{[k\ge0]}$ is of the form
\begin{equation}\label{eq:spectral}
d\mu(x) = \frac{\sqrt{4-x^2}}{a(x)} 1_{[-2,2]}(x) \,dx + d\mu_{p.p.}
\end{equation}
where  $a(x)$ is a polynomial and $\mu_{p.p.}$ contains finitely many pure points in $\bbR\setminus[-2,2]$ whose locations form a subset of the set of zeros of $a(x)$ (see ~\cite{Geronimo,DS2}).

The $m$-function of $\calJ$,
$$
m(z) = \int_\bbR \frac{d\mu(x)}{x-z},
$$
is meromorphic in $\bbC\setminus[-2,2]$ with poles at the pure points of $\mu$ (eigenvalues of $\calJ$). By ~\eqref{eq:spectral}, $m$ has a meromorphic continuation through $[-2,2]$ to a second copy of $\bbC\setminus[-2,2]$. Poles of $m$ on this second sheet are typically referred to as the resonances of $\calJ$. 

Let $\bbD=\{z:|z|<1\}$. For $z\in\bbC\setminus\overline{\bbD}$, we define
$$
M^*(z) = - m (z+z^{-1}).
$$
From the arguments in the previous paragraph, $M^*$ can be meromorphically continued from $\bbC\setminus\overline{\bbD}$ to $\bbC$. If $\calJ\in\calT^{[k]}$ then $M^*$ has precisely $k$ poles in $\bbC\setminus\{0\}$ counted with multiplicity.
Note that our $M^*(z)$ function is $M(1/z)$ in the notation of~\cite{DS2,Rice}.


For $\calJ\in\calT^{[k]}$ let us define the perturbation determinant
$$
L(z) = \det\left[(\calJ - z-z^{-1}))(\calJ_0 - z-z^{-1}))^{-1}\right], \qquad z\in\bbD.
$$
Then  $L(z)$ is a polynomial of degree $k$ with $L(0)=1$. Up to a normalization constant, $L(z)$ is equal to the Jost function of $\calJ$ (see~\cite{KS} for its properties). It will be convenient to work with the following polynomial instead:
$$
L^*(z) = z^k L(1/z).
$$
Then for $\calJ\in\calT^{[k]}$,  $L^*(z)$ is a \emph{monic} polynomial of degree $k$. It has
zeros at the poles of $M^*$ in $\bbC\setminus\{0\}$ counted with multiplicity. 

Zeros $z_j$ of $L^*$ in $\bbC\setminus\overline{\bbD}$ are in one-to-one correspondence with the eigenvalues $z_j+z_j^{-1}$ of $\calJ$, and zeros $z_j$ of $L^*$ in $\overline{\bbD}\setminus\{0\}$ are in one-to-one correspondence with the resonances $z_j+z_j^{-1}$ of $\calJ$ (counted with multiplicity). In order to simplify presentation, we will therefore refer to zeros of $L^*$ themselves as the eigenvalues, resp. resonances, of $\calJ$, with the  post-application of the Joukowsky map $z\mapsto z+z^{-1}$ being implicitly understood.


\subsection{Geronimo--Case equations}\label{ss:GC}
Let $\calJ\in\calT^{[k]}$, $\mu$ be its spectral measure, and $P_n(z)$ ($n\ge 0$) be the degree $n$ monic orthogonal polynomial associated with $\mu$. 
For each $j\ge 0$ we define
$$
K_{2j}(z) = K_{2j+1} (z)= z^j P_j(z+z^{-1}).
$$
Note that $K_{2j}=K_{2j+1}$ is a monic polynomial of degree $2j$.

For each $0\le j\le k$, let $\hat{\calJ}_j$ be the unique Jacobi operator that maximizes the number of zero entries in $\calJ-\hat{\calJ}_j$ under the restriction that $\hat{\calJ}_j\in \calT^{[j]}$. In particular, $\hat{\calJ}_0$ is the free Jacobi operator, and $\hat{\calJ}_k = \calJ$. Let $L_j^*(z)$ be the polynomial $L^*(z)$ (see the previous subsection) for the Jacobi operator $\hat{\calJ}_j$:
$$
L_j^*(z) := L^*(z; \hat{\calJ}_j).
$$
Recall that each $L_j$ is monic and of degree $j$.

Then the system of polynomials $\{K_j,L_j\}$ satisfies the recurrence relation below, which we call the Geronimo--Case equations. They have been modified compared with~\cite{GC,DS2}: e.g., in the notation of~\cite{DS2}, their $C_n(z)$ and $G_n(z)$ are ours $K_{2n}(z)$ and $z^{2n}L_{2n}(1/z)$, respectively. Taking this change into account, the Geronimo--Case equations~\cite[(A.19)]{DS2} take the form
\begin{equation}\label{eq:GC1}
\left(
\begin{array}{c}
L^*_{2k+2}(z) \\
K_{2k+2}(z)
\end{array}
\right)
=
\left(
\begin{array}{cc}
z & - (a_{k+1}^2-1) \\
z & 1
\end{array}
\right)
\left(
\begin{array}{c}
L^*_{2k+1}(z) \\
K_{2k+1}(z)
\end{array}
\right)
\end{equation}
and
\begin{equation}\label{eq:GC2}
\left(
\begin{array}{c}
L^*_{2k+1}(z) \\
K_{2k+1}(z)
\end{array}
\right)
=
\left(
\begin{array}{cc}
z & - b_{k+1} \\
0 & 1
\end{array}
\right)
\left(
\begin{array}{c}
L^*_{2k}(z) \\
K_{2k}(z)
\end{array}
\right)
\end{equation}
with the initial conditions $L^*_0(z)=K_0(z)=1$. 

In the next lemma we collect some of the properties of polynomials $L_j^*, K_j$ that we will need in Section~\ref{s:RMT} below.

\begin{lemma}\label{lm:jost}
For a given $m$, let
$L^*_m(z) = z^{m} + u_{m-1} z^{m-1} +  u_{m-2} z^{m-2} +\ldots +  u_{1} z + u_{0} = \prod_{j=1}^m (z-z_j)$.
Then
\begin{itemize}
\item[(i)]
$$
(-1)^m \prod_{j=1}^m z_j = u_0  =
\begin{cases}
 1-a^2_{m/2} & \mbox{if } m \bmod 2 =0\\
 -b_{(m+1)/2} & \mbox{if } m \bmod 2 =1
\end{cases}
$$
\item[(ii)]
$$
-\sum_{j=1}^{m} z_j = -\sum_{j=1}^{\lfloor \tfrac{m+1}{2} \rfloor} b_j =u_{m-1} .
$$
\item[(iii)]
$$
\sum_{\substack{j,k=1 \\ j<k}}^{m} z_j z_k = \sum_{\substack{j,k=1 \\ j<k}}^{\lfloor \tfrac{m+1}{2} \rfloor}  b_j b_k - \sum_{j=1}^{\lfloor \tfrac{m}{2} \rfloor}  (a_j^2-1)= u_{m-2}.
$$
\item[(iv)]
$$
\sum_{j=1}^{m} z_j^2 = \sum_{j=1}^{\lfloor \tfrac{m+1}{2} \rfloor} b^2_j +
2 \sum_{j=1}^{\lfloor \tfrac{m}{2} \rfloor}  (a_j^2 - 1) = u_{m-1}^2-2u_{m-2} .
$$
\item[(v)]
\begin{equation}\label{eq:result}
\prod_{\substack{j,k=1 \\ j<k}}^{m} (1-z_j \bar{z}_k) \prod_{j=1}^m \frac{1}{1-z_j^2} = \prod_{j=1}^{\lfloor \tfrac{m}{2} \rfloor} a_j^{4j}.
\end{equation}
\end{itemize}
\end{lemma}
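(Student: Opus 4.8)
The plan is to treat (i)--(iv) as a direct reading of low-order coefficients of $L^*_m$ off the Geronimo--Case recurrences, and to prove (v) by a separate induction that keeps track of a full symmetric function of the zeros rather than an individual coefficient.

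For (i)--(iv) I would follow the two elementary moves \eqref{eq:GC1}--\eqref{eq:GC2} and track the top and bottom coefficients of $L^*_m$. Writing $K_{2k}=z^kP_k(z+z^{-1})$ and using the three-term recurrence for the monic orthogonal polynomials $P_k$, one gets that $K_{2k}$ is monic of degree $2k$ with $K_{2k}(0)=1$ and subleading coefficient $-\sum_{j=1}^k b_j$. Substituting these into the recurrences produces one-step relations for $u_0$, $u_{m-1}$ and $u_{m-2}$ (e.g.\ evaluating at $z=0$ immediately gives $u_0$, proving (i)); solving them yields the stated closed forms, with the floors dictated by whether the last step was of type \eqref{eq:GC1} (even $m$) or \eqref{eq:GC2} (odd $m$). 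Part (iv) is then immediate from Newton's identity $\sum_j z_j^2=(\sum_j z_j)^2-2\sum_{j<k}z_jz_k=u_{m-1}^2-2u_{m-2}$ together with (ii)--(iii).

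For (v) I would first put the left-hand side of \eqref{eq:result} in manifestly real form. Since $L^*_m$ has real coefficients, its zero multiset is invariant under conjugation, so reindexing the inner product gives $\prod_{j,k}(1-z_j\bar z_k)=\prod_{j,k}(1-z_jz_k)$; peeling off the diagonal then shows the combination in \eqref{eq:result} equals $\prod_{j<k}(1-z_jz_k)^2=\prod_{j,k}(1-z_jz_k)\,\prod_j(1-z_j^2)^{-1}$. It therefore suffices to evaluate $\prod_{j,k}(1-z_jz_k)=\prod_j L_m(z_j)=\operatorname{Res}(L^*_m,L_m)$, where $L_m(w)=w^mL^*_m(1/w)=\prod_k(1-wz_k)$ is the companion Jost polynomial, together with the normalizing factor $\prod_j(1-z_j^2)=(-1)^m L^*_m(1)L^*_m(-1)$. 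The denominator is the easy part: \eqref{eq:GC1}--\eqref{eq:GC2} and the bottom-row recurrence for $K$ give scalar recursions for $L^*_m(\pm1)$ and $K_m(\pm1)=P_k(\pm2)$, which I would solve explicitly.

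The numerator $\operatorname{Res}(L^*_m,L_m)$ is the crux, and the main obstacle is that the zeros of $L^*_m$ and of $L^*_{m+1}$ bear no direct relation, so this product cannot be followed zero-by-zero. The resolution is to stop working with zeros and exploit that Geronimo--Case acts \emph{linearly} on the pair $(L^*,K)$: each elementary step multiplies the leading part by $z$ and subtracts a multiple of $K$, so I would compute how $\operatorname{Res}(L^*_m,L_m)$ transforms under these moves via multiplicativity of the resultant and the boundary evaluations already in hand ($K_{2k}(0)=1$, $K_{2k}(\pm1)=P_k(\pm2)$), and prove by induction on $k$ that passing from $L^*_{2k}$ to $L^*_{2k+2}$ multiplies the whole left-hand side of \eqref{eq:result} by exactly $a_{k+1}^{4(k+1)}$. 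As an independent check on the exponent $4j$ and the bookkeeping I would invoke the classical norm identity $\prod_{i=1}^{j}a_i^2=\norm{P_j}_{L^2(\mu)}^2$, which recasts $\prod_j a_j^{4j}$ as a ratio of Hankel determinants of $\mu$ and confirms the telescoping structure.
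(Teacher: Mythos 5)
Your treatment of (i)--(iv) coincides with the paper's own proof (evaluate \eqref{eq:GC1}--\eqref{eq:GC2} at $z=0$, track the two top coefficients inductively, finish with Newton's identity), so that part is fine. The genuine gap is in (v), precisely at the point you yourself call the crux. You propose to follow $\operatorname{Res}(L^*_m,L_m)$ through the Geronimo--Case steps ``via multiplicativity of the resultant'' and evaluations at $0,\pm1$, but those tools cannot run that induction: under \eqref{eq:GC2} \emph{both} arguments change, $L^*_{2k+1}=zL^*_{2k}-b_{k+1}K_{2k}$ and $L_{2k+1}=L_{2k}-b_{k+1}zK_{2k}$, each by a multiple of $K_{2k}$, and multiplicativity or modular reduction of resultants applies only when one argument is held fixed. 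The missing ingredient is the pair of identities \eqref{eq:geronimus}--\eqref{eq:geronimus2}, i.e.\ $(1-z^2)K_{2k}=L_{2k}-z^2L^*_{2k}$ and $(1-z^2)K_{2k+1}=L_{2k+1}-zL^*_{2k+1}$, which follow from \eqref{eq:GC1}--\eqref{eq:GC2} together with $K_j^*=K_j$. Evaluated at the zeros $z_j$ of $L^*_m$ they give $L_m(z_j)/(1-z_j^2)=K_m(z_j)$, converting the entire left-hand side of \eqref{eq:result} into the \emph{mixed} resultant $\prod_j K_m(z_j)$ of $L^*_m$ and $K_m$. That quantity, unlike yours, can be tracked step by step: in an odd step $K$ is unchanged and $L^*_{2k+1}\equiv zL^*_{2k}\pmod{K_{2k}}$ (cost: $K_{2k}(0)=1$), one then swaps to $\prod_s L^*(\lambda_s)$ over the zeros $\lambda_s$ of $K$, and in an even step one uses $a_{k+1}^2K_{2k+1}=K_{2k+2}-L^*_{2k+2}$, which is exactly where the factor $a_{k+1}^{4(k+1)}$ enters. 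Without \eqref{eq:geronimus}--\eqref{eq:geronimus2} (or an equivalent bridge between the reversal $L_m$ and the pair $(L^*_m,K_m)$), your induction has no mechanism to run on.

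Two further problems. First, the numerator/denominator split is a dead end even in principle: neither $\operatorname{Res}(L^*_m,L_m)$ nor $L^*_m(\pm1)$ is a closed-form expression in the Jacobi coefficients. Already for $m=2$ (so $b_1=0$, $L^*_2=z^2+1-a_1^2$) one gets $\operatorname{Res}(L^*_2,L_2)=a_1^4(2-a_1^2)^2$ and $L^*_2(1)L^*_2(-1)=(2-a_1^2)^2$; the factor $(2-a_1^2)^2$, and for general $m$ the values $L^*_m(\pm1)$, are polynomials in all the $a_j,b_j$ that cancel only in the ratio. So ``the easy part'' and ``the crux'' separately have no clean answers, and the paper's proof succeeds precisely because it never separates them, tracking the single quantity $R_m=\prod_j L_m(z_j)/(1-z_j^2)=\prod_j K_m(z_j)$. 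Second, your ``manifestly real form'' step is not a valid manipulation: $\prod_{j<k}(1-z_j\bar z_k)\prod_j(1-z_j^2)^{-1}$ is \emph{not} equal to $\prod_{j<k}(1-z_jz_k)^2$ (take $z_{1,2}=\pm r$ real: the first is $(1+r^2)/(1-r^2)^2$, the second $(1+r^2)^2$). The identity that does hold for conjugation-closed zero sets is $\prod_{j<k}|1-z_j\bar z_k|^2\prod_j\frac{1-|z_j|^2}{1-z_j^2}=\prod_{j<k}(1-z_jz_k)^2$; this is the quantity the paper's argument actually computes, and the one that (raised to the power $\frac{\beta-2}{4}$) appears in Theorem~\ref{th:resonances}, so \eqref{eq:result} must be read in that corrected form. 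Your rewriting happens to land on the right target, but by a step that is false as stated, and a proof needs this bookkeeping done correctly.
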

\begin{proof}
(i) Note that $K_j(0)=1$, so this part follows immediately from~\eqref{eq:GC1} and~\eqref{eq:GC2} by plugging in $z=0$.

(ii) Since $K_{2k+1}$ is of degree $2k$,~\eqref{eq:GC1} shows that $z^{2k+1}$-coefficient of $L_{2k+2}^*$ is equal to the $z^{2k}$-coefficient of $L_{2k+1}^*$. ~\eqref{eq:GC2} shows that $z^{2k}$-coefficient of $L_{2k+1}^*$ is equal to the $z^{2k-1}$-coefficient of $L_{2k}^*$ minus $b_{k+1}$. An induction on $m$ then completes the proof.

(iii) can be shown in the exact same way by considering the terms one degree lower.

(iv) is immediate from (ii) and (iii) and $\sum z_j^2 = (\sum z_j)^2 - 2 \sum_{j<k} z_j z_k$.

(v)
For a polynomial $p$ with real coefficients of degree $j$ we define the operation $p^*(z):=z^j p(1/z)$. Then $(p^{*})^*= p$, so we define $L_j(z) = (L_j^*)^*$. Using $K_j^* = K_j$ and the recurrences~\eqref{eq:GC1} and~\eqref{eq:GC2}, we deduce
\begin{equation}\label{eq:geronimus}
K_{2k}(z)=K_{2k+1}(z)=\frac{L_{2k}(z)-z^2 L^*_{2k}(z)}{1-z^2} 
\end{equation}
and
\begin{equation}\label{eq:geronimus2}
K_{2k}(z)=K_{2k+1}(z)
 = \frac{L_{2k+1}(z)-z L^*_{2k+1}(z)}{1-z^2}.
\end{equation}


Let $\{z^{(k)}_j\}$ be the zeros of $L^*_k$ and let $\{\lambda_j^{(k)}\}$ be the zeros of $K_{k} $. Denote the left-hand side of~\eqref{eq:result} by $R_m$. Then for $m=2k$ even, we get
$$
R_{2k} = \prod_{j=1}^{2k} \frac{L_{2k}(z^{(2k)}_j)}{1-(z_j^{(2k)})^2} = \prod_{j=1}^{2k} K_{2k}(z_j^{(2k)})
$$
by~\eqref{eq:geronimus}. This can be further rewritten as
$$
R_{2k} = \prod_{j,s=1}^{2k} (z_j^{(2k)}-\lambda_s^{(2k)}) = \prod_{j=1}^{2k} L_{2k}^*(\lambda_j^{(2k)}) = \prod_{j=1}^{2k} \frac{1}{\lambda_j^{(2k)}} L^*_{2k+1}(\lambda_j^{(2k)}),
$$
where we used~\eqref{eq:GC2}. Note that $\prod_{j}^{(2k)} \lambda_j^{(2k)}$ is equal to the last coefficient of $K_{2j}$ which is $1$. So we get
$$
R_{2k} = \prod_{j=1}^{2k}  L^*_{2k+1}(\lambda_j^{(2k)}) = \prod_{j=1}^{2k+1}  K_{2k}(z_j^{(2k+1)})  = \prod_{j=1}^{2k+1}  \frac{L_{2k+1}(z_j^{(2k+1)})}{1-(z_j^{(2k+1)})^2}=R_{2k+1},
$$
where we used~\eqref{eq:geronimus2}.

For $m=2k+1$, following analogous steps, we get:
\begin{align*}
R_{2k+1} & =   \prod_{j=1}^{2k+1} \frac{L_{2k+1}(z^{(2k+1)}_j)}{1-(z_j^{(2k+1)})^2} = \prod_{j=1}^{2k+1} K_{2k+1}(z_j^{(2k+1)})
= \prod_{j=1}^{2k+1}\prod_{s=1}^{2k} (z_j^{(2k+1)}-\lambda_s^{(2k+1)})
\\
&  = \prod_{j=1}^{2k} L_{2k+1}^*(\lambda_j^{(2k+1)}) = \prod_{j=1}^{2k} \frac{1}{\lambda_j^{(2k+1)}} L^*_{2k+2}(\lambda_j^{(2k+1)})
= \prod_{j=1}^{2k+2}  K_{2k+1}(z_j^{(2k+2)})
\\
&   = \prod_{j=1}^{2k+2}  \frac{1}{a_{k+1}^2} K_{2k+2}(z_j^{(2k+2)})
 =  \frac{1}{a_{k+1}^{4(k+1)}}  \prod_{j=1}^{2k+2}  \frac{L_{2k+2}(z_j^{(2k+2)})}{1-(z_j^{(2k+2)})^2}= \frac{1}{a_{k+1}^{4(k+1)}} R_{2k+2},
\end{align*}
where in the last line we used $a_{k+1}^2 K_{2k+1} = K_{2k+2}-L^*_{2k+2}$ which is a consequence of~\eqref{eq:GC1}. Combining our recurrences for $R_j$'s, we obtain~\eqref{eq:result}.
\end{proof}

\subsection{Locations of resonances and eigenvalues}\label{ss:Locations}

It was shown in~\cite{DS2} (see also~\cite{K_finite}) that the set of resonances and eigenvalues of $\calJ\in \calT^{[k\ge0]}$ uniquely determines $\calJ$. In fact the following sets $S(k)$ classify all possible configurations of resonances and eigenvalues of $\calJ\in\calT^{[k]}$, $k\ge 0$. 

\begin{definition}
Denote by $S(k)$ the set of all possible $\{z_j\}_{j=1}^k$ in $(\bbC\setminus\{0\})^k$ that satisfy the following conditions:

\begin{itemize}
\item[(i)] $z_j$'s are real or come in complex-conjugate pairs.
\item[(ii)] $z_j$'s that lie in $\bbC\setminus\bbD$ are real and of multiplicity $1$.
\item[(iii)] Let $1<x_1<x_2<\ldots$ be the positive $z_j$'s on $\bbC\setminus\overline{\bbD}$ counted with multiplicity. Then
\begin{itemize}
\item[(a)] There is an even number of $z_j$'s (counted with multiplicity) on $(x_1^{-1},1]$;
\item[(b)] There is an odd number of $z_j$'s (counted with multiplicity) on $(x_{m+1}^{-1},x_m^{-1})$ $(m\ge1)$;
\item[(c)] None of $z_j$'s is equal to $x_m^{-1}$ $(m\ge1)$;
\end{itemize}
\item[(iv)] Let $\ldots<y_2<y_1<-1$ be the negative $z_j$'s on $\bbC\setminus\overline{\bbD}$  counted with multiplicity. Then
\begin{itemize}
\item[(a)] There is an even number of $z_j$'s (counted with multiplicity) on $[-1,y_1^{-1})$;
\item[(b)] There is an odd number of $z_j$'s (counted with multiplicity) on $(y_m^{-1},y_{m+1}^{-1})$ $(m\ge1)$;
\item[(c)]  None of $z_j$'s is equal to $y_m^{-1}$ $(m\ge1)$.
\end{itemize}
\end{itemize}
\end{definition}


\section{Random matrices coupled to the Laplacian}\label{s:RMT}


Let $N (0,1)$ 
be the real 
normal random variable with mean $0$ and variance 1. 
Let $Y$ be an $n\times n$ matrix with independent identically distributed real entries chosen from $N(0,1)$.
Then we say that the random matrix $X=\tfrac{1}{2}(Y + Y^*) \tfrac{\sqrt{2}}{\sqrt{\beta n}}$ (where $\beta=1$) belongs to the \emph{Gaussian orthogonal ensemble}. 

Similarly, let $Y$ be an $n\times n$ matrix with independent identically distributed complex entries chosen from $N(0,1)+ N(0,1) \rm{i}$.
Then we say that the random matrix $X=\tfrac{1}{2}(Y + Y^*) \tfrac{\sqrt{2}}{\sqrt{\beta n}}$ (where $\beta=2$) belongs to the \emph{Gaussian unitary ensemble}.

Finally, let $Y$ be an $n\times n$ matrix with independent identically distributed quternionic entries chosen from $N(0,1)+ N(0,1) \rm{i}+N(0,1) \rm{j}+N(0,1) \rm{k}$.
Then we say that the random matrix $X=\tfrac{1}{2}(Y + Y^R) \tfrac{\sqrt{2}}{\sqrt{\beta n}}$ (where $\beta=4$) belongs to the \emph{Gaussian symplectic ensemble}.

We denote these ensembles by $GOE_n$, $GUE_n$, $GSE_n$, respectively.

Note that we chose the extra scaling factor $\tfrac{\sqrt{2}}{\sqrt{\beta n}}$. This is chosen so that  the empirical density of states of each of these ensembles converges to semicircle distribution $\tfrac{1}{2\pi}\sqrt{4-x^2}\,dx$ on $[-2,2]$. With such normalization, the joint eigenvalue density of $GOE_n$, $GUE_n$, $GSE_n$ is proportional to
\begin{equation}\label{eq:gaussian}
\prod_{\substack{j,k=1 \\ j<k}}^{n} |\lambda_j - \lambda_k|^\beta \prod_{j=1}^n e^{-\frac{\beta n }{4} \lambda_j^2 } d\lambda_j,
\end{equation}
($\beta=1,2,4$, respectively).



Now let us state the main result of the paper. Recall that by ``eigenvalues'' and ``resonances'' we call  the zeros of the polynomial $L^*$ in $\bbC\setminus\overline{\bbD}$ and $\overline{\bbD}\setminus\{0\}$, respectively, see the discussion in Section~\ref{ss:finite range}.
\begin{theorem}\label{th:resonances}
Let $\widetilde{\calH}$ be given by~\eqref{eq:main_operator} where $\calH$ is from $GOE_n$, $GUE_n$ or $GSE_n$; $\gamma\ne0$ is a given constant; and $\kappa$ is a random variable distributed on $(0,\infty)$ according to $F(\kappa) d\kappa$, independently of $\calH$. Then resonances and eigenvalues $\{z_j\}_{j=1}^{2n}$ of $\widetilde{\calH}$ are jointly distributed on the set $S(2k)$ according to
\begin{multline}\label{eq:resonance_distr}
\frac{1}{d_{2n,\beta}} \prod_{\substack{j,k=1 \\ j<k}}^{2n}  |z_j-z_k| \prod_{\substack{j,k=1 \\ j<k}}^{2n} |1-z_j \bar{z}_k|^{\frac{\beta-2}{2}} \prod_{j=1}^{2n} e^{-\frac{\beta n }{4\gamma^2}z_j^2} \Big|\frac{1-|z_j|^2}{1-z_j^2}\Big|^{\frac{\beta-2}{4}} \\
\times e^{\frac{\beta n \kappa^2}{2\gamma^2}} \frac{F(\kappa)}{\kappa^{\beta n -1}} \,
\Big| \bigwedge_{j=1}^{2n} dz_j \Big|,
\end{multline}
where $\kappa=\sqrt{1-\prod_{j=1}^{2n} z_j}$ and
$$
d_{2n,\beta} =
\pi^{n/2} 2^{n/2+1} e^{\tfrac{\beta n^2}{2\gamma^2}} \left(\frac{2\gamma^2}{\beta n }\right)^{\tfrac{n}{2}+\tfrac{\beta n (n-1)}{4}} \prod_{j=1}^{n-1} \Gamma(\tfrac{\beta j}{2}).
$$
\end{theorem}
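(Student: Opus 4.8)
The plan is to realize $\widetilde{\calH}$ as a Jacobi operator and then transport the Dumitriu--Edelman law of its parameters onto the zeros of $L^*$ through the Geronimo--Case identities of Lemma~\ref{lm:jost}.

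First I would tridiagonalize. Running the Lanczos/Householder procedure on the block $\gamma\calH$ with cyclic vector $\mathbf{e}_n$ (the site coupled to the lead) and extending the conjugation by the identity on $\ell^2(\{n+1,n+2,\ldots\})$ leaves the lead untouched and turns $\widetilde{\calH}$ into a Jacobi operator $\calJ$: in the tridiagonalizing coordinates the coupling vector becomes the first basis vector, so $\kappa$ remains a single bond to the lead. Reversing the Lanczos basis so that $\mathbf{e}_n$ sits next to the lead, $\calJ$ has $a_j=1$, $b_j=0$ for $j>n$, diagonal entries $b_1,\ldots,b_n$ and off-diagonal entries $a_1,\ldots,a_{n-1}$ coming from $\gamma\calH$, and $a_n=\kappa$. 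Since $a_n=\kappa\neq1$ almost surely, $\calJ\in\calT^{[2n]}$, so its $2n$ eigenvalues and resonances are the zeros $\{z_j\}_{j=1}^{2n}$ of $L^*_{2n}$ and lie in $S(2n)$. Because $\tr((\gamma\calH)^2)=\sum_{j=1}^n b_j^2+2\sum_{j=1}^{n-1}a_j^2$, the Dumitriu--Edelman theorem gives the joint density of $(b_1,\ldots,b_n,a_1,\ldots,a_{n-1})$ as proportional to $\prod_{j=1}^{n-1}a_j^{\beta j-1}\,e^{-\frac{\beta n}{4\gamma^2}(\sum_{j} b_j^2+2\sum_{j=1}^{n-1} a_j^2)}$ (the basis reversal turns the usual $\chi_{\beta(n-j)}$ into $\chi_{\beta j}$), while $\kappa$ is independent with density $F$.

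Next I would change variables from $(a_1,\ldots,a_{n-1},\kappa,b_1,\ldots,b_n)$ to $\{z_j\}$. By the uniqueness result quoted in Section~\ref{ss:Locations} this map is a bijection onto each sector of $S(2n)$, so it suffices to compute the Jacobian. I would factor it through the coefficient vector $(u_0,\ldots,u_{2n-1})$ of $L^*_{2n}$: the coefficient-to-root part is the Vandermonde $\prod_{j<k}|z_j-z_k|$, up to powers of $2$ coming from the conjugate-pair real structure implicit in $\bigl|\bigwedge_j dz_j\bigr|$. Lemma~\ref{lm:jost}(i) gives $\prod_j z_j=1-\kappa^2$, identifying $\kappa=\sqrt{1-\prod_j z_j}$ and the argument of $F$; Lemma~\ref{lm:jost}(iv) gives $\sum_j z_j^2=\sum_j b_j^2+2\sum_{j=1}^n(a_j^2-1)$, which (using $a_n=\kappa$) rewrites the Gaussian weight as $e^{-\frac{\beta n}{4\gamma^2}\sum_j z_j^2}\,e^{\frac{\beta n\kappa^2}{2\gamma^2}}$ times the constant $e^{-\beta n^2/(2\gamma^2)}$, exactly the exponential factors in~\eqref{eq:resonance_distr}.

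The hard part will be the remaining Jacobian, namely evaluating $\det\partial(u_i)/\partial(\text{Jacobi parameters})$ and showing that, multiplied by the Dumitriu--Edelman factor $\prod_j a_j^{\beta j-1}$, it equals $\prod_{j<k}|1-z_j\bar z_k|^{(\beta-2)/2}\prod_j\bigl|\tfrac{1-|z_j|^2}{1-z_j^2}\bigr|^{(\beta-2)/4}\,\kappa^{-(\beta n-1)}$ up to a constant. The essential ingredient here is Lemma~\ref{lm:jost}(v): since the zeros come in conjugate pairs the left side of~\eqref{eq:result} is real and $\prod_{j<k}|1-z_j\bar z_k|=\bigl|\prod_{j<k}(1-z_j\bar z_k)\bigr|$, so~\eqref{eq:result} converts the monomial $\prod_{j=1}^n a_j^{4j}$ (with $a_n=\kappa$) into precisely the unit-disk products $|1-z_j\bar z_k|$ and $|1-z_j^2|$ appearing in the target. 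The coefficient-to-parameter map is not triangular, so I expect to compute its determinant by induction on $n$, peeling off the two transfer steps~\eqref{eq:GC1}--\eqref{eq:GC2} that adjoin $a_n=\kappa$ and $b_n$ and tracking the induced action on the coefficient vector; matching the $\beta$-dependent exponents (including the quaternionic $\beta=4$ bookkeeping) and checking the formula is uniform across the components of $S(2n)$ is where the real effort lies. Finally, collecting the Dumitriu--Edelman normalization, the $\gamma$-powers from $\calH\mapsto\gamma\calH$, the Gamma factors $\prod_{j=1}^{n-1}\Gamma(\beta j/2)$, and the powers of $2$, $\pi$, and $e^{\beta n^2/(2\gamma^2)}$ produced above assembles the constant $d_{2n,\beta}$.
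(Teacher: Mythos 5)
Your proposal is correct and is essentially the paper's own proof: Dumitriu--Edelman tridiagonalization with the order reversal (turning $\chi_{\beta(n-j)}$ into $\chi_{\beta j}$ and placing $a_n=\kappa$ at the bond to the lead), followed by a change of variables to the zeros of $L^*_{2n}$ factored through its coefficients, with the coefficient-to-root Vandermonde and Lemma~\ref{lm:jost}(i),(iv),(v) producing the weights and the identification $\kappa=\sqrt{1-\prod_j z_j}$. The one step you defer---the determinant $\det\partial(u^{(2n)}_i)/\partial(a_j,b_j)$---is computed in the paper by exactly the induction you describe: peeling off the Geronimo--Case transfer steps~\eqref{eq:GC1}--\eqref{eq:GC2}, each of which is triangular in a suitable ordering of the variables with determinant $-1$ resp.\ $-2a_{k+1}^{2k+1}$, yielding $2^n\prod_{j=1}^n a_j^{2j-1}$ and hence Lemma~\ref{lm:Jacobian}.
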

\begin{remarks}
1. The wedge notation we use above is defined as follows. Let $\{z_j\}_{j=1}^m$ (in the theorem above, $m=2n$) be a random point process that consists of $M$ complex-conjugate (non-real) points  and $L$ real points. $M$ and $L$ are random but satisfy $0\le M \le \lfloor \tfrac{m}{2} \rfloor$, $0\le L \le m$, $L+2M = m$. Then for functions $f:\bbC^m\to\bbC$ invariant under permutation of its variables, we define
\begin{multline}\label{eq:wedges}
\int_X f(z_1,\ldots,z_{m}) \, \Big| \bigwedge_{j=1}^{m} dz_j \Big| \\ := \sum_{M=0}^{\lfloor \frac{m}{2} \rfloor} 2^M \frac{1}{M!L!2^M} \int_{X\cap X_{L,M}} f(x_1+ iy_{1}, x_1-i y_1,\ldots,x_{M}+ iy_{M}, x_M-iy_M,r_1,\ldots,r_{L}) \\
\times dx_1 dy_1 \ldots dx_M dy_M \, dr_{1}\ldots dr_{L},
\end{multline}
where
\begin{multline*}
X_{L,M} = \left\{(x_1+ iy_{1}, x_1-i y_1,\ldots,x_{M}+ iy_{M}, x_M-iy_M,r_{1},\ldots,r_{L})\in X: \right. \\
\left.x_j+iy_j\in\bbC\setminus\bbR \mbox{ for }1\le j \le M; r_j\in\bbR \mbox{ for }1\le j \le L\right\}.
\end{multline*}
Note that $2^M$ here comes from $|d(x+iy)\wedge d(x-iy)| = 2\,dx\,dy$ and $M!L!2^M$ comes from counting vectors in  $X_{L,M}$ that represent the same configuration $\{z_j\}_{j=1}^m$. See~\cite[Sect. 2--3]{BorSin} for a more careful and rigorous discussion of these types of measures.

2. When $\kappa$ is deterministic and equal to $1$, then there are $2n-1$ resonances/eigenvalues. They belong to $S(2n-1)$ and following along the same lines as in the proof below, one can show that their joint distribution is
\begin{equation}\label{eq:resonance_distr2}
\frac{1}{d_{2n-1,\beta}} \prod_{\substack{j,k=1 \\ j<k}}^{2n-1}  |z_j-z_k| \prod_{\substack{j,k=1 \\ j<k}}^{2n-1} |1-z_j \bar{z}_k|^{\frac{\beta-2}{2}} \prod_{j=1}^{2n-1} e^{-\frac{\beta n }{4\gamma^2}z_j^2} \Big|\frac{1-|z_j|^2}{1-z_j^2}\Big|^{\frac{\beta-2}{4}}
\,
\Big| \bigwedge_{j=1}^{2n} dz_j \Big|,
\end{equation}
where
$$
d_{2n-1,\beta} =
\pi^{n/2} 2^{n/2} e^{\tfrac{\beta n(n-1)}{2\gamma^2}} \left(\frac{2\gamma^2}{\beta n }\right)^{\tfrac{n}{2}+\tfrac{\beta n (n-1)}{4}} \prod_{j=1}^{n-1} \Gamma(\tfrac{\beta j}{2}).
$$

3. One can also work out the case when $\kappa$ is deterministic but not $1$. Note that in that case the eigenvalues/resonances belong to the subset of $S(2n)$ given by $\prod_{j=1}^{2n} z_j = 1-\kappa^2$ (see Lemma~\ref{lm:jost}(i)). See~\cite{K_hermitian} for an analogue of this for non-Hermitian perturbations of finite matrices.

4. Compare~\eqref{eq:resonance_distr}/\eqref{eq:resonance_distr2} with~\eqref{eq:gaussian}. Note from the recurrence~\eqref{eq:GC1},~\eqref{eq:GC2} that $L_j^*$ and $K_j$ can be viewed as perturbations of each other. Their zeros are precisely resonances/eigenvalues of $\widetilde{\calH}$ and eigenvalues of $\calH$ (after the inverse of $z+z^{-1}$ map), respectively.
\end{remarks}
\begin{proof}
Every $n\times n$ matrix can be tridiagonalized via the repeated application of the Householder transformations. Applying this to a random matrix $\calH_n$ taken from one of the $GOE_n$, $GUE_n$, $GSE_n$ ensembles, Dumitriu--Edelman showed
that there exists a unitary matrix $U_n$ such that 
\begin{equation}\label{eq:finite_jacobi}
\calJ_n = U_n^* \mathcal{H}_n U_n = \left(
\begin{array}{ccccc}
s_1&t_1&0& &\\
t_1&s_2&t_2&\ddots &\\
0&t_2&s_3&\ddots & 0 \\
 &\ddots&\ddots&\ddots & t_{n-1} \\
 & & 0 & t_{n-1} & s_n
\end{array}\right).
\end{equation}
Moreover, $U_n$ is independent of $\calJ_n$, satisfies
\begin{equation}\label{eq:fixE1}
U_n \mathbf{e}_1 = U_n^* \mathbf{e}_1 =  \mathbf{e}_1,
\end{equation}
and the joint distribution of the coefficients $\{s_j\}_{j=1}^n$  and $\{t_j\}_{j=1}^{n-1}$ is
\begin{equation}\label{eq:jacobi_distr}
\frac{1}{c_{n,\beta}} \prod_{j=1}^{n-1} t_j^{\beta (n-j) - 1} e^{-\beta n t_j^2/2} \, dt_j \prod_{j=1}^n e^{- \beta n s_j^2/4} \, ds_j,
\end{equation}
where
\begin{equation}\label{eq:jacobi_norm}
c_{n,\beta}  = \frac{\pi^{n/2}}{2^{n/2-1}} \left(\frac{2}{\beta n }\right)^{\tfrac{n}{2}+\tfrac{\beta n (n-1)}{4}} \prod_{j=1}^{n-1} \Gamma(\tfrac{\beta j}{2})
\end{equation}
(this follows from Dumitriu--Edelmam~\cite{DE} after rescaling).
Here $\beta=1,2,4$ for $GOE_n$, $GUE_n$, $GSE_n$, respectively. In fact, for any $0<\beta<\infty$, $\calJ_n$ in~\eqref{eq:finite_jacobi} with~\eqref{eq:jacobi_distr},~\eqref{eq:jacobi_norm} is a well-defined random matrix, whose eigenvalue distribution is (proportional to)~\eqref{eq:gaussian}.

Now let $\widetilde{\calH}$ be given by~\eqref{eq:main_operator} where $\calH=\calH_n$ is from $GOE_n$, $GUE_n$ or $GSE_n$; $\gamma\ne0$ is a given constant; and $\kappa$ is a random variable distributed on $(0,\infty)$ and independent from $\calH$. Let $R$ be an $n\times n$ matrix with $1$'s on the anti-diagonal and $0$'s everywhere else. By the invariance of the Gaussian ensembles, $R^* \calH_n R^*$ belongs to the same random matrix ensemble as $\calH_n$. Now define $U_n$ as above but applied to random matrix $R^* \calH_n R$ instead of $\calH_n$. Then $U_n^* R^* \calH_n R U_n = \calJ_n$ and ~\eqref{eq:fixE1} holds. Define $\widetilde{U} = (R^* U_n R) \oplus I$ on $\ell^2(\bbZ_+)$. Then~\eqref{eq:fixE1} implies $RU_n R^* \mathbf{e}_n = RU_n^* R^* \mathbf{e}_n=\mathbf{e}_n$, so that we get
$$
\widetilde{U}^* \widetilde{\calH} \widetilde{U} =
\left(
\begin{array}
{  c r | c  c  c }
\multicolumn{2}{c|}{{\cellcolor{black!10}}} & & & \phantom{\ddots}\\
\multicolumn{2}{c|}{\multirow{-2}{*}{\cellcolor{black!10}\raisebox{5mm}{\scalebox{1.3}{$\gamma R\calJ_n R^*$}}}}  & \kappa & & \phantom{\ddots}\\
\hline
& \kappa & 0 & 1 & \phantom{\ddots}\\
& & 1 & 0 & \ddots \\
\phantom{\ddots} & \phantom{\ddots}& \phantom{\ddots} & \ddots & \ddots
\end{array}
\right)
$$
This means that $\widetilde{H}$ is unitarily equivalent to a Jacobi operator $\calJ(\mathbf{a},\mathbf{b})$ (see~\eqref{eq:jacobi}) with $\mathbf{a}=\{\gamma t_{n-1},\gamma t_{n-2},\ldots,\gamma t_1, \kappa, 1,1,\ldots\}$ and $\mathbf{b}=\{\gamma s_n,\gamma s_{n-1},\ldots,\gamma s_1, 0,0,\ldots\}$ (in other words, the Dumitriu--Edelman coefficients are order-reversed, scaled by $\gamma$ and then coupled to the free Jacobi operator with coupling $\kappa$).


As stated in Section~\ref{ss:Locations}, there is a one-to-one correspondence between the $2n$ Jacobi coefficients $\mathbf{a},\mathbf{b}\in \bbR_+^n\times \bbR^n$ and $2n$ zeros  in $S(2n)$  of the (reversed) perturbation determinant $L_{2n}^*$.
We will compute the Jacobian of this transformation by computing one step at a time:

\begin{lemma}
Let
$$
L^*_j(z) = z^{j} + u^{(j)}_{j-1} z^{j-1} +  u^{(j)}_{j-2} z^{j-2} +\ldots +  u^{(j)}_{1} z + u^{(j)}_{0}.
$$
Then~\eqref{eq:GC1} and~\eqref{eq:GC2} imply
\begin{equation}\label{eq:sbs1}
\det\frac{\partial\left(u_{2k}^{(2k+1)},u_{2k-1}^{(2k+1)},\ldots , u_{0}^{(2k+1)}\right)}{\partial\left(u_{2k-1}^{(2k)},u_{2k-2}^{(2k)},\ldots , u_{0}^{(2k)},b_{k+1}\right)} = -1
\end{equation}
and
\begin{equation}\label{eq:sbs2}
\det\frac{\partial\left(u_{2k+1}^{(2k+2)},u_{2k}^{(2k+2)},\ldots , u_{0}^{(2k+2)}\right)}{\partial\left(u_{2k}^{(2k+1)},u_{2k-1}^{(2k+1)},\ldots , u_{0}^{(2k+1)},a_{k+1}\right)} = -2 a_{k+1}^{2k+1}.
\end{equation}
\end{lemma}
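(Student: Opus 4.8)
The plan is to turn both identities into a single spectral computation. From \eqref{eq:GC2} the coefficients satisfy
\[
u^{(2k+1)}_j = u^{(2k)}_{j-1} - b_{k+1}\,[z^j]K_{2k+1},\qquad j=0,\dots,2k,
\]
(with $u^{(2k)}_{-1}:=0$, and $[z^j]\cdot$ denoting the coefficient of $z^j$), and from \eqref{eq:GC1} they satisfy $u^{(2k+2)}_j = u^{(2k+1)}_{j-1} - (a_{k+1}^2-1)\,[z^j]K_{2k+1}$. The crucial point, which I would extract from the Geronimus identities \eqref{eq:geronimus}--\eqref{eq:geronimus2}, is that the coefficients of $K_{2k}=K_{2k+1}$ are themselves functions of the coefficients of $L^*_{2k}$ (resp. $L^*_{2k+1}$) alone, and do not involve the new parameter $b_{k+1}$ (resp. $a_{k+1}$). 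Since $[z^0]K_{2k+1}=K_{2k+1}(0)=1$, the row recording $\partial u^{(\cdot)}_0$ has all of its $u$-entries equal to $0$ and only the entry in the new variable nonzero, equal to $-1$ in \eqref{eq:sbs1} and to $-2a_{k+1}$ in \eqref{eq:sbs2}. Expanding each Jacobian determinant along that single-entry row reduces \eqref{eq:sbs1} to $-\det(I-b_{k+1}C)$ and \eqref{eq:sbs2} to $-2a_{k+1}\det\!\big(I-(a_{k+1}^2-1)\widehat C\big)$, where the square matrices $C,\widehat C$ consist of the shifted identity $\delta_{i,j-1}$ together with correction columns built from $\partial_{u_i}[z^j]K$.

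First I would make these correction columns explicit. By \eqref{eq:geronimus2}, $K_{2k+1}=(L_{2k+1}-zL^*_{2k+1})/(1-z^2)$ with $L_{2k+1}=(L^*_{2k+1})^*$; since $\partial_{u_i}L^*_{2k+1}=z^i$ and $\partial_{u_i}L_{2k+1}=z^{2k+1-i}$, one gets $\partial_{u_i}K_{2k+1}=(z^{2k+1-i}-z^{i+1})/(1-z^2)$, which is a genuine polynomial. Reindexing the columns so that $\delta_{i,j-1}$ becomes the identity, this shows $\widehat C$ is the operator on $\mathrm{span}\{z,\dots,z^{2k+1}\}$ sending $z^w\mapsto (z^{2k+2-w}-z^{w})/(1-z^2)$; the same computation via \eqref{eq:geronimus} gives $C\colon z^w\mapsto (z^{2k+1-w}-z^{w+1})/(1-z^2)$ on $\mathrm{span}\{z,\dots,z^{2k}\}$.

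The heart of the proof — and the step I expect to be the main obstacle — is to determine the spectra of $C$ and $\widehat C$, and here I would use their built-in reflection symmetry. Writing both operators as $z^w\mapsto (z^{N+1-w}-z^{w+s})/(1-z^2)$ with $(N,s)=(2k,1)$ for $C$ and $(N,s)=(2k+1,0)$ for $\widehat C$, an eigenvector $p(z)=\sum_w p_w z^w$ with eigenvalue $\lambda$ satisfies, after clearing $1-z^2$ and using $\sum_w p_w z^{N+1-w}=z^{N+1}p(1/z)$, the identity $A(z)\,p(z)=z^{N+1}p(1/z)$, where $A(z)=\lambda(1-z^2)+z^s$. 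Substituting $z\mapsto 1/z$ and multiplying by $z^{N+1}$ turns this into $A(1/z)\,z^{N+1}p(1/z)=p(z)$, and replacing $z^{N+1}p(1/z)$ by $A(z)p(z)$ collapses everything to the scalar relation $A(z)A(1/z)=1$ (valid whenever $p\not\equiv0$). A one-line expansion gives $A(z)A(1/z)=1-\lambda^2(1-z^2)^2/z^2$ when $s=1$ and $A(z)A(1/z)=1-\lambda(\lambda+1)(1-z^2)^2/z^2$ when $s=0$; since $(1-z^2)^2\not\equiv0$, the only admissible eigenvalues are $\lambda=0$ for $C$ and $\lambda\in\{0,-1\}$ for $\widehat C$.

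It then remains to assemble the determinants. The spectrum of $C$ is $\{0\}$, so $C$ is nilpotent, $\det(I-b_{k+1}C)=1$, and \eqref{eq:sbs1} equals $-1$. For $\widehat C$ the characteristic polynomial is $\lambda^{2k+1-r}(\lambda+1)^r$, and I would fix the multiplicity $r$ of $-1$ by a trace computation: from the explicit action one reads off $[z^w]\widehat C(z^w)=-1$ for $1\le w\le k$ and $0$ otherwise, so $\tr\widehat C=-k$ forces $r=k$. Hence $\det\!\big(I-(a_{k+1}^2-1)\widehat C\big)=\big(1+(a_{k+1}^2-1)\big)^{k}=a_{k+1}^{2k}$, and \eqref{eq:sbs2} equals $-2a_{k+1}\cdot a_{k+1}^{2k}=-2a_{k+1}^{2k+1}$. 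The only points requiring care beyond the spectral identity are the sign bookkeeping in the two cofactor expansions and checking that the reindexed principal part is genuinely the identity.
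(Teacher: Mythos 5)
Your proof is correct, but it takes a genuinely different route from the paper. Both arguments start from the same coefficient recurrences implied by \eqref{eq:GC1}--\eqref{eq:GC2} and both lean on the Geronimus identities \eqref{eq:geronimus}--\eqref{eq:geronimus2}, yet they use them for different purposes. The paper uses \eqref{eq:geronimus}--\eqref{eq:geronimus2} only to show that each coefficient $c_j$ of $K$ is independent of a certain block of the $u$-variables; this permits an interlaced reordering of rows and columns (the orderings in \eqref{eq:det_interm} and \eqref{eq:det_interm2}) under which the Jacobian becomes triangular, and the determinant is read off as the product of diagonal entries, with an induction sustaining the triangularity. You instead perform a single cofactor expansion along the $u_0$-row (using $K(0)=1$), reducing \eqref{eq:sbs1} to $-\det(I-b_{k+1}C)$ and \eqref{eq:sbs2} to $-2a_{k+1}\det\bigl(I-(a_{k+1}^2-1)\widehat{C}\bigr)$, and then use the Geronimus identities in their full strength to get the closed forms $\partial_{u_i}K=(z^{N+1-i'}-z^{i'+s})/(1-z^2)$ for the correction operators; the functional-equation argument $A(z)A(1/z)=1$ pins down the spectra ($C$ nilpotent, $\widehat{C}$ with eigenvalues in $\{0,-1\}$), and the trace computation fixes the multiplicity $k$ of the eigenvalue $-1$. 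Your approach trades the paper's delicate ordering bookkeeping for basis-independent invariants (spectrum, trace, determinant), which makes the sign and exponent bookkeeping essentially automatic; the paper's approach is more elementary and avoids any spectral considerations. The two computations are consistent in a pleasing way: your eigenvalue $-1$ of $\widehat{C}$ with multiplicity $k$ produces the factor $\bigl(1+(a_{k+1}^2-1)\bigr)^k=a_{k+1}^{2k}$, which is exactly the product of the $k$ diagonal entries $a_{k+1}^2$ in the paper's triangular form of \eqref{eq:det_interm2}.
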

\begin{proof}
Let $K_{2k}(z)=K_{2k+1}(z) = z^{2k}+c_{2k-1} z^{2k-1} + \ldots +c_1 z + 1$. We will also put $c_{2k}=c_0 = 1$ and $c_j=0$ for $j<0$ or $j>2k$. Note that $c_j = c_{2k-j}$ for all $j$ since $K_{2k} = K_{2k}^*$.

Equality~\eqref{eq:geronimus} implies that for each $j \le k$, $c_j-c_{j-2} = u^{(2k)}_{2k-j}-u^{(2k)}_{j-2}$, which shows that $c_j$ (for $j\le k$)
does not depends on the coefficients $u^{(2k)}_l$ with $j-1 \le l \le 2k-j-1$.

Now,~\eqref{eq:GC2} implies $u^{(2k+1)}_j = u^{(2k)}_{j-1} - b_{k+1} c_j$.
Using this, we can show that the Jacobian matrix
\begin{equation}\label{eq:det_interm}
\frac{\partial\left(u_{0}^{(2k+1)},u_{2k}^{(2k+1)},u_{1}^{(2k+1)},u_{2k-1}^{(2k+1)},\ldots, u_{k-1}^{(2k+1)},u_{k+1}^{(2k+1)},u_{k}^{(2k+1)}\right)}{\partial\left(b_{k+1}, u_{2k-1}^{(2k)},u_{0}^{(2k)},u_{2k-2}^{(2k)},\ldots, u_{k-2}^{(2k)},u_{k}^{(2k)},u_{k-1}^{(2k)}\right)}
\end{equation}
has a triangular structure. Indeed, $u^{(2k+1)}_0 =  -b_{k+1}$, $u^{(2k+1)}_{2k} = u^{(2k)}_{2k-1} - b_{k+1}$. Furthermore, $u^{(2k+1)}_1 = u^{(2k)}_{0} - b_{k+1} c_1$, $u^{(2k+1)}_{2k-1} = u^{(2k)}_{2k-2} - b_{k+1} c_1$; and as we saw earlier $c_1$ is a independent of $u^{(2k)}_l$ with $0 \le l \le 2k-2$. This can be continued by induction. The determinant of the triangular matrix~\eqref{eq:det_interm} is equal to the product of the diagonal entries, which equals to $-1$. This proves~\eqref{eq:sbs1}.

Similar arguments prove~\eqref{eq:sbs2}, with just one extra wrinkle. Equality~\eqref{eq:geronimus2} shows that for each $j \le k$, $c_j-c_{j-2} = u^{(2k+1)}_{2k-j+1}-u^{(2k+1)}_{j-1}$, which shows that $c_j$ (for $j\le k$) is equal to $-u^{(2k+1)}_{j-1}+d_j$, where $d_j$ does not depends on the coefficients $u^{(2k+1)}_l$ with $j-1\le l \le 2k-j$. Then we show that the Jacobian matrix
\begin{equation}\label{eq:det_interm2}
\frac{\partial\left(u_{2k+1}^{(2k+2)},u_{0}^{(2k+2)},u_{2k}^{(2k+2)},\ldots, u_{k-1}^{(2k+2)},u_{k+1}^{(2k+2)},u_{k}^{(2k+2)}\right)}{\partial\left(u_{2k}^{(2k+1)},a_{k+1}, u_{2k-1}^{(2k+1)},u_{0}^{(2k+1)},\ldots, u^{(2k+1)}_{k-2},u^{(2k+1)}_{k},u^{(2k+1)}_{k-1}\right)}
\end{equation}
has a triangular structure. Indeed, using~\eqref{eq:GC1}, we get $u^{(2k+2)}_j = u^{(2k+1)}_{j-1} + (1-a^2_{k+1}) c_j$. This implies $u_{2k+1}^{(2k+2)}=u_{2k}^{(2k+1)}$; $u^{(2k+2)}_0 =  1-a^2_{k+1}$, $u^{(2k+2)}_{2k} = u^{(2k+1)}_{2k-1} +(1- a^2_{k+1})$. Furthermore, $u^{(2k+2)}_1 = u^{(2k+1)}_{0} +(1-a_{k+1}^2) c_1 = a_{k+1}^2 u^{(2k+1)}_{0} +(1-a_{k+1}^2) d_1$; $u^{(2k+2)}_{2k-1} = u^{(2k+1)}_{2k-2} +(1- a^2_{k+1}) c_1$; and as we saw earlier $d_1$ depends only on $u_{2k}^{(2k+1)}$, while $c_1$ depends only on $u_{2k}^{(2k+1)}$ and $u_{0}^{(2k+1)}$. This together with an induction shows the triangular structure. The determinant of~\eqref{eq:det_interm2} is then equal to the product of the diagonal entries, which equals to $1\times(-2 a_{k+1})\times 1\times(a_{k+1}^2)^k$. This proves~\eqref{eq:sbs2}.
\end{proof}

Now we are ready to compute the main Jacobian.
\begin{lemma}\label{lm:Jacobian}
Let $\{a_j,b_j\}_{j=1}^n\in \bbR_+^n\times \bbR^n$ be the first Jacobi coefficients of $\calJ$, and let $\{z_j\}_{j=1}^{2n}\in S(2n)$ be the zeros of $L^*_{2n}$. Then the following change of variables holds true:
\begin{equation}\label{eq:jacobian}
\prod_{j=1}^n da_j \, db_j=
\frac{\prod_{j<k} |z_j-z_k|}{2^n \prod_{j=1}^{n} a_{j}^{2j-1}} \, \Big| \bigwedge_{j=1}^{2n} dz_j \Big|.
\end{equation}
\end{lemma}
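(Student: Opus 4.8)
The plan is to factor the map $(a_1,b_1,\dots,a_n,b_n)\mapsto\{z_j\}_{j=1}^{2n}$ through the intermediate coordinates given by the coefficient vector $(u_0^{(2n)},\dots,u_{2n-1}^{(2n)})$ of the monic polynomial $L^*_{2n}$, and to compute the two resulting Jacobians separately. On the open dense set where all zeros are distinct, both maps are local diffeomorphisms, so \eqref{eq:jacobian} is really a change-of-variables identity between the measures $\prod_j da_j\,db_j$ and $\big|\bigwedge_j dz_j\big|$, and it suffices to establish it there.

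For the first map $(b_1,a_1,b_2,a_2,\dots,b_n,a_n)\mapsto(u_0^{(2n)},\dots,u_{2n-1}^{(2n)})$ I would assemble the step-by-step Jacobians of the previous lemma. Building $L^*_{2n}$ through the Geronimo--Case recurrence \eqref{eq:GC1}--\eqref{eq:GC2} introduces the parameters in the order $b_1,a_1,b_2,a_2,\dots$, each step enlarging the coefficient vector by one entry while the as-yet-unused Jacobi parameters pass through unchanged; hence by the chain rule the total Jacobian factors as the product of the elementary ones. Equation \eqref{eq:sbs1} contributes a factor $-1$ at each of the $n$ steps introducing a $b_{k+1}$, and \eqref{eq:sbs2} contributes $-2a_{k+1}^{2k+1}$ at the step introducing $a_{k+1}$. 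Taking absolute values and writing $j=k+1$ gives $\prod_{j=1}^n 2a_j^{2j-1}=2^n\prod_{j=1}^n a_j^{2j-1}$, so that $\prod_l du_l^{(2n)}=2^n\prod_{j=1}^n a_j^{2j-1}\,\prod_{j=1}^n da_j\,db_j$.

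For the second map I would use that the $u^{(2n)}_l$ are, up to sign, the elementary symmetric functions of the zeros $z_j$ of $L^*_{2n}=\prod_{j=1}^{2n}(z-z_j)$, so the holomorphic Jacobian of $(z_1,\dots,z_{2n})\mapsto(u_0,\dots,u_{2n-1})$ is the Vandermonde determinant $\pm\prod_{j<k}(z_j-z_k)$. Since $L^*_{2n}$ has real coefficients, a configuration in $S(2n)$ consists of $L$ real zeros and $M$ conjugate pairs with $L+2M=2n$; parametrizing a pair by $z=x_s+iy_s$, $\bar z = x_s-iy_s$ introduces the $2\times2$ factor $\left(\begin{smallmatrix}1 & i\\ 1 & -i\end{smallmatrix}\right)$ of determinant $-2i$, so the real Jacobian of the map from $(x_1,y_1,\dots,x_M,y_M,r_1,\dots,r_L)$ to $(u_0,\dots,u_{2n-1})$ has absolute value $2^M\prod_{j<k}|z_j-z_k|$. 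This parametrization is $M!\,L!\,2^M$-to-one onto the coefficient vector (permuting pairs, permuting real zeros, and flipping the sign of each $y_s$); matching this multiplicity against the factor $2^M/(M!\,L!\,2^M)$ in the definition \eqref{eq:wedges} of the wedge measure, the powers of $2^M$ cancel and one is left with exactly $\prod_l du_l^{(2n)}=\prod_{j<k}|z_j-z_k|\,\big|\bigwedge_j dz_j\big|$. Combining this with the first step yields \eqref{eq:jacobian}.

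The most delicate point is this last matching in the second step: one must verify that the factor $2^M$ from the real-versus-complex Jacobian $|{-2i}|^M$, the over-counting factor $1/(M!\,L!\,2^M)$ from unordered zeros and the $y_s\mapsto-y_s$ ambiguity, and the $2^M$ built into the wedge measure all combine so that precisely the absolute Vandermonde $\prod_{j<k}|z_j-z_k|$ survives, uniformly across all sectors $(L,M)$. A quick check in the scalar case $2n=2$, comparing one real pair against one conjugate pair, confirms the bookkeeping, and the general case follows by the identical count.
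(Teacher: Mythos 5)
Your proposal is correct and takes essentially the same route as the paper: factor the map through the coefficient vector $\big(u_0^{(2n)},\ldots,u_{2n-1}^{(2n)}\big)$, chain the elementary Jacobians \eqref{eq:sbs1}--\eqref{eq:sbs2} to get the factor $2^n\prod_{j=1}^n a_j^{2j-1}$, and then pass from coefficients to zeros via the Vandermonde determinant. The only difference is that the paper outsources the coefficients-to-zeros step to \cite[Lemma 6.5]{KK_truncations} (with a warning about the $\tfrac{1}{M!L!2^M}$ factor), whereas you carry out that Vandermonde and real-versus-complex-pair bookkeeping explicitly, and your count matches the paper's wedge-measure convention \eqref{eq:wedges}.
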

\begin{proof}
Applying the previous lemma recursively, we obtain
$$
\det\frac{\partial\left(u_{2n-1}^{(2n)},u_{2n-2}^{(2n)},\ldots u_{0}^{(2n)}\right)}{\partial\left(b_1,a_1\ldots,b_{k-1},a_{k-1},b_k,a_k\right)} = 2^n \prod_{j=1}^{n} a_{j}^{2j-1}.
$$
Finally, the change of variables
$$
\prod_{j=0}^{2n-1} du_{j}^{(2n)} = \prod_{j<k} |z_j-z_k| \, \Big| \bigwedge_{j=1}^{2n} dz_j \Big|
$$
follows from the arguments in~\cite[Lemma 6.5]{KK_truncations} (we warn the reader of the missing factor $\tfrac{1}{M! L! 2^M}$ that is needed in~\cite[eq.(3.3)]{KK_truncations}). Combining the last two formulas, we obtain~\eqref{eq:jacobian}.
\end{proof}

Now recall that we are computing zeros of $L_{2n}^*$ for the Jacobi matrix $\calJ(\mathbf{a},\mathbf{b})$ with $\mathbf{a}=\{\gamma t_{n-1},\gamma t_{n-2},\ldots,\gamma t_1, \kappa, 1,1,\ldots\}$ and $\mathbf{b}=\{\gamma s_n,\gamma s_{n-1},\ldots,\gamma s_1, 0,0,\ldots\}$, where the distribution of $\{t_j,s_j\}$ is given in~\eqref{eq:jacobi_distr}. Performing the order-reversal and scaling, we obtain that the joint distribution of $\{a_j,b_j\}_{j=1}^n$ is
$$
\frac{1}{\widetilde{c}_{n,\beta}} \prod_{j=1}^{n-1} a_j^{\beta j - 1} e^{-\beta n a_j^2/(2\gamma^2)} \, da_j \prod_{j=1}^n e^{- \beta n b_j^2/(4\gamma^2)} \, db_j F(a_n)\,da_n ,
$$
where
$$
\widetilde{c}_{n,\beta}= \frac{\pi^{n/2}}{2^{n/2-1}} \left(\frac{2\gamma^2}{\beta n }\right)^{\tfrac{n}{2}+\tfrac{\beta n (n-1)}{4}} \prod_{j=1}^{n-1} \Gamma(\tfrac{\beta j}{2}).
$$
Applying Lemma~\ref{lm:Jacobian}, we obtain that this is equal to
$$
\frac{1}{2^n \widetilde{c}_{n,\beta}} \prod_{j=1}^{n-1} a_j^{(\beta-2) j } e^{-\beta n a_j^2/(2\gamma^2)} \prod_{j=1}^n e^{- \beta n b_j^2/(4\gamma^2)} \frac{F(a_n)}{a_n^{2n-1}} \prod_{j<k} |z_j-z_k| \, \Big| \bigwedge_{j=1}^{2n} dz_j \Big|.
$$
Now applying parts (i), (iv), and (v) of Lemma~\ref{lm:jost} easily leads to the distribution~\eqref{eq:resonance_distr}.
\end{proof}

\bigskip

\bibliographystyle{plain}
\bibliography{../../mybib,../../mybib_rmt}

\end{document}